\pgfplotsset{compat=1.10}
\tikzset
{
    treenode/.style = {ellipse, draw=black, align=center, minimum size=1cm}
}
\newcommand{\IN}{\mathbb{N}}
\newcommand{\INpos}{\IN_{0}}
\newcommand{\card}[1]{|{#1}|}
\mathchardef\mhyphen="2D
\newcommand{\hph}[1]{{\mathsf{H\mhyphen PARTITION}}({#1})}
\newcommand{\hp}{{\mathsf{H\mhyphen PARTITION}}}
\newcommand{\sat}{{\mathsf{SAT}}}
\newcommand{\twosat}{{\mathsf{2SAT}}}
\newcommand{\mg}[1]{({#1})}
\newcommand{\Ndot}[1]{\mathsf{N_{D}}({#1})}
\newcommand{\Nfull}[1]{\mathsf{N_{F}}({#1})}
\newcommand{\Full}[1]{{\mathsf{Full}}({#1})}
\newcommand{\Dotted}[1]{{\mathsf{Dotted}}({#1})}
\newcommand{\model}[1]{({#1})}
\newtheorem{thm}{Theorem}
\newtheorem{prop}[thm]{Proposition}
\newtheorem{lem}[thm]{Lemma}
\newtheorem{corol}[thm]{Corollary}
\theoremstyle{definition}
\newtheorem{definition}{Definition}
\newcommand{\maxi}{\mathit{max}}
\newcommand{\mini}{\mathit{min}}
\title{Computing H-Partitions in ASP and Datalog}
\author{Chloé Capon\\
\texttt{chloe.capon2@student.umons.ac.be}
\and 
Nicolas Lecomte\\
\texttt{nicolas.lecomte@student.umons.ac.be}
\and Jef Wijsen\\
\texttt{jef.wijsen@umons.ac.be}}
\affil{University of Mons (UMONS)\\
Mons, Belgium}
\date{}
\begin{document}

\maketitle


\begin{abstract}
A $H$-partition of a finite undirected simple graph $G$ is a labeling of~$G$'s vertices such that the constraints expressed by the model graph~$H$ are satisfied. For every model graph~$H$, it can be decided in non-deterministic polynomial time whether a given input graph~$G$ admits a $H$-partition. Moreover, it has been shown in~\cite{DBLP:journals/ita/DantasFGK05} that for most model graphs, this decision problem is in deterministic polynomial time. 
In this paper, we show that these polynomial-time algorithms for finding $H$-partitions can be expressed in Datalog with stratified negation.
Moreover, using the answer set solver Clingo~\cite{DBLP:conf/lpnmr/GebserKKS11, DBLP:journals/corr/GebserKKS14}, we have conducted experiments to compare straightforward guess-and-check programs with Datalog programs. 
Our experiments indicate that in Clingo, guess-and-check programs run faster than their equivalent Datalog programs.
\end{abstract}

\maketitle


\section{Introduction}

Answer Set Programming (ASP) is a powerful programming paradigm that allows for an easy encoding of decision problems in $\NP$. 
If the answer to a problem in $\NP$ is ``yes,''  then, by definition, there is a ``yes''-certificate that can be checked in polynomial time. 
In an ASP \emph{guess-and-check program}, a programmer first declares the format of such a certificate, and then specifies the constraints that a well-formatted certificate should obey in order to be a ``yes''-certificate. 
For example, for the well-known problem $\sat$, an ASP-programmer can first declare that certificates take the form of truth assignments, and then specify that ``yes''-certificates are those certificates that leave no clause unsatisfied. 

While ASP guess-and-check programs are typically oriented towards $\NP$-complete problems, they can also be used for problems in $\P$. For example, the previously mentioned encoding of $\sat$ also solves $\twosat$, which is known to be in $\P$.
This raises the following issue which will be addressed in this paper.
Assume that we have an answer set solver at our disposal, and that we have written a guess-and-check ASP program for a particular problem that is $\NP$-complete in general (for example, $\sat$).
Assume furthermore that we know that under some restrictions, the problem can be solved in polynomial time (for example, the restriction of $\sat$ to $\twosat$). In logic programming, it may be possible to encode such a polynomial-time solution in a syntactic restriction of ASP, for example, in Datalog with stratified negation~\cite{DBLP:books/aw/AbiteboulHV95}, which guarantees the existence of a polynomial-time execution. 
From a theoretical standpoint, a program in Datalog is more efficient than a general guess-and-check ASP program.
From a practical standpoint, however, it is not clear whether such a Datalog program will run faster in our ASP engine, compared to a general guess-and-check ASP program. 

%

Our problem of interest is that of finding $H$-partitions in undirected graphs~$G$~\cite{DBLP:journals/ita/DantasFGK05}. 
All undirected graphs in this paper are understood to be finite and simple.
Given an undirected graph $G$, this problem consists in deciding whether there exists a partition of $G$'s vertices such that this partition respects some constraints encoded in another graph $H$ with four vertices, called \emph{model graph}. Each class of the partition is labeled by a vertex in $H$. There are two types of constraints:
\begin{itemize}
\item 
if $H$ contains a \emph{full edge} between two vertices, $A$ and $B$, then each vertex in the class labeled by $A$ must be adjacent to each vertex in the class labeled by $B$; and
\item 
if $H$ contains a \emph{dotted edge} between two vertices, $A$ and $B$, then each vertex in the class labeled by $A$ must be nonadjacent to each vertex in the class labeled by $B$.
\end{itemize}
Note that there are no constraints between two vertices in the same class. 
All possible model graphs, up to isomorphisms, are presented in Figures~\ref{fig:k2s22k2} and~\ref{fig:list_models}.
The $H$-partitioning problem for $K_{2}+S_{2}$, also known as finding a skew partition, is in polynomial time~\cite{DBLP:journals/jal/FigueiredoKKR00,DBLP:conf/jcdcg/KennedyR07}.
The $H$-partitioning problem for $2K_{2}$ is $\NP$-complete~\cite{DBLP:journals/endm/CamposDFG05}.
For each model graph in Figure~\ref{fig:list_models}, Dantas et al.~\cite{DBLP:journals/ita/DantasFGK05} provide a polynomial-time algorithm, of low polynomial degree, for deciding whether an undirected graph admits a $H$-partition.
In this paper, we show how these polynomial-time algorithms can be encoded in  Datalog with stratified negation.
We then experimentally compare these Datalog programs with a guess-and-check ASP program.
In theory, the computational complexity of Datalog is in $\P$, while guess-and-check ASP programs can solve $\NP$-complete problems.

\begin{center}
\begin{figure}\centering
\includegraphics[scale=0.7]{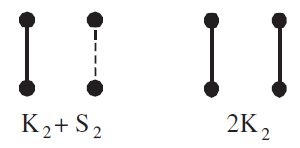}
\caption{Examples of model graphs (figure copied from \cite{DBLP:journals/ita/DantasFGK05}).}
\label{fig:k2s22k2}
\end{figure}
\end{center}

This paper is organized as follows. 
Section~\ref{sec:prelim} formalizes the problem of finding $H$-partitions and paraphrases the polynomial-time method of Dantas et al.~\cite{DBLP:journals/ita/DantasFGK05}. 
Sections~\ref{sec:asp} and~\ref{sec:datalog_strat} focus on finding $H$-partitions in logic programming, first in ASP, and then in Datalog with stratified negation.
In our experimental validation of Section~\ref{sec:gen_instances}, we first generate undirected graphs of different sizes on which our programs can be executed.
We distinguish between those input graphs that admit a $H$-partition, called yes-instances, and those that do not, called no-instances.
We show different methods for generating yes-instances and no-instances, for a fixed model graph $H$, and then compare the running times of our logic programs in the answer set solver Clingo~\cite{DBLP:conf/lpnmr/GebserKKS11, DBLP:journals/corr/GebserKKS14}.  
Finally, Section \ref{sec:conclusion} concludes the paper.

\section{Preliminaries} \label{sec:prelim}

In this section, we formally define the problem $\hp$ and sketch an algorithm borrowed from~\cite{DBLP:journals/ita/DantasFGK05}. 
In Section~\ref{sec:datalog_strat}, this algorithm will be written in Datalog with stratified negation.

\subsection{The Problem $\hp$}

\begin{figure}\centering
\includegraphics[scale=0.7]{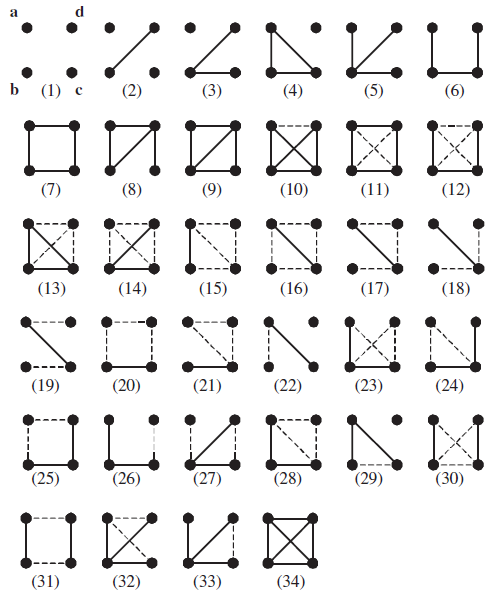}
\caption{List of model graphs (figure copied from \cite{DBLP:journals/ita/DantasFGK05}).}
\label{fig:list_models}
\end{figure}

A \emph{model graph} $H$ is an undirected graph with four vertices, called $A$, $B$, $C$ and $D$ (sometimes written in lowercase letters). Every edge is of exactly one of two types: \emph{full} or \emph{dotted}.
All possible model graphs, up to isomorphisms, are presented in Figures~\ref{fig:k2s22k2} and~\ref{fig:list_models}.
We define $\hp$ as the following decision problem.
\begin{description}
\item[Problem $\hp$]
\item[Input:]
A model graph $H$; an undirected graph $G$.
\item[Question:]
Is it possible to partition $V(G)$ into four pairwise disjoint, non-empty subsets, $V_A$, $V_B$, $V_C$, and $V_D$ such that, for all $p,q\in\{A, B, C, D\}$ with $p \neq q$:
\begin{itemize}
\item 
if $(p,q)$ is a full edge of $H$, then every vertex in $V_p$ is adjacent to every vertex in $V_q$; and
\item 
if $(p,q)$ is a dotted edge of $H$, then every vertex in $V_p$ is nonadjacent to every vertex in $V_q$.
\end{itemize}
Such a partition, if it exists, is called a \emph{solution} or a \emph{$H$-partition}. 
\end{description}
We call a pair $(H,G)$ a \emph{yes-instance} if $\hp$ returns \emph{yes} on input~$H$ and~$G$; otherwise $(H,G)$ is a \emph{no-instance}.
$\hph{H}$ denotes the $\hp$ problem for a fixed model graph~$H$.

The partitions $V_A$, $V_B$, $V_C$, and $V_D$ are commonly denoted by $A$, $B$, $C$, and $D$, respectively.
One can view such a partition of $V(G)$ as a labeling of $V(G)$ with the labels $A$, $B$, $C$, and $D$.
From here on, we will often omit curly braces and commas in the denotation of subsets of $\{A, B, C, D\}$.
For example, $ABD$ denotes the set $\{A, B, D\}$. 

\begin{definition}
This definition is relative to a fixed model graph $H$.
Let $L\subseteq\{A, B, C, D\}$.
We define $\Nfull{L}$ as the set of vertices of $H$ that are adjacent, through a full edge, to some vertex of~$L$.
Similarly, 
$\Ndot{L}$ is the set of vertices of $H$ that are adjacent, through a dotted edge, to some vertex of~$L$.
We say that $L$ is \emph{trivial} if $\card{L}=1$.
\qed
\end{definition}

Informally, given a model graph $H$, each label among $A$, $B$, $C$, and $D$ imposes a constraint that can be read from~$H$.
For example, for $H=\mg{19}$, the label $A$ imposes \emph{``being adjacent to $C$ and nonadjacent to $D$,''} and $C$ imposes \emph{``being adjacent to $A$ and nonadjacent to $B$.''}
A list of two or more labels imposes all the constraints of each label in the list.
For example, for $H=\mg{19}$, the list $AC$ imposes \emph{``being adjacent to $C$,  nonadjacent to $D$, adjacent to $A$, and nonadjacent to $B$.''} 
A list is called \emph{conflicting} if it imposes an unsatisfiable constraint.
For example, for $H=\mg{20}$, the list $AC$ is conflicting,
because $A$ imposes \emph{``being nonadjacent to $B$,''}
while $C$ imposes \emph{``being adjacent to $B$.''}
We will call a list \emph{non-maximal}\footnote{In~\cite{DBLP:journals/ita/DantasFGK05}, non-maximal lists are called \emph{impossible}.} if it can be extended into a longer list that imposes the same constraint.
For example, for $H=\mg{33}$, the lists $CD$ and $ACD$ impose the same constraint, namely \emph{``being adjacent to $B$, nonadjacent to $C$, and nonadjacent to $D$,'' } and therefore $CD$ is non-maximal. 

\begin{definition}
This definition is relative to a fixed model graph $H$.
Let $L\subseteq\{A, B, C, D\}$.
$L$ is \emph{conflicting} if $\Nfull{L}\cap \Ndot{L}\neq\emptyset$.
$L$ is \emph{non-maximal} if there exists $L'\subseteq\{A, B, C, D\}$ such that $L\varsubsetneq L'$ with $\Nfull{L}=\Nfull{L'}$ and $\Ndot{L}=\Ndot{L'}$.
\qed
\end{definition}

For example, let $H$ be model graph $\mg{13}$ of Figure~\ref{fig:list_models}. The list $AB$ is a non-maximal because for the set $ABC$, we have that $\Nfull{AB} = ABC = \Nfull{ABC}$ and $\Ndot{AB} = D = \Ndot{ABC}$. The list $AD$ is conflicting because $B\in\Ndot{AD}\cap\Nfull{AD}$.

\subsection{Finding $H$-Partitions}

Dantas et al.~\cite{DBLP:journals/ita/DantasFGK05} have shown that $\hph{H}$ is in polynomial time for all model graphs $H$ in Figure~\ref{fig:list_models}. 
Our algorithmic approach slightly differs from theirs because, unlike~\cite{DBLP:journals/ita/DantasFGK05}, we will not use non-maximal or conflicting lists. This is possible thanks to Lemma~\ref{lem:correctness}.

\begin{definition}
Let $H$ be a model graph, and $G$ an undirected graph.
Let $x_A, x_B, x_C$, and $x_D$ be four distinct vertices of $G$. We say that the quadruplet $(x_A, x_B, x_C, x_D)$ is $H$\emph{-isomorphic} if the subgraph of $G$ induced by these vertices is a yes-instance of $\hph{H}$ for a labeling where $x_A$, $x_B$, $x_C$, and $x_D$ are respectively labeled by $A$, $B$, $C$, and $D$.
\qed
\end{definition}

Our algorithmic approach loops over all quadruplets $(x_A, x_B, x_C, x_D)$ of vertices that are $H$-isomorphic.
Each such quadruplet yields an initial partial labeling.
Given such an initial partial labeling, we test whether it can be extended into a complete labeling, i.e., into a solution.
To this end, we repeatedly pick an unlabeled vertex, and compute its possible labels.
When we say that a label $P\in\{A,B,C,D\}$ is \emph{possible} for a vertex, we mean that the model graph does not forbid us to label that vertex with $P$, given the vertices that have already been labeled.
If only one label is possible for an unlabeled vertex, we label that vertex with that label.
If no label is possible for some unlabeled vertex, we conclude that the current labeling cannot be extended to a complete labeling.
If for every unlabeled vertex at least two labels remain possible,
then we conclude that $G$ is a yes-instance for  $\hph{H}$ (except for model graphs \model{7}, \model{10} and \model{11} where some additional tests are needed).

In \cite{DBLP:journals/ita/DantasFGK05}, \emph{refined lists} are obtained by removing conflicting and non-maximal lists. 
The description on [DdFGK05, page 141] might suggest that we must be careful to only consider refined lists.
However, the following lemma implies that conflicting or non-maximal lists will not occur in our algorithmic approach. 

\begin{lem}\label{lem:correctness}
Let $H$ be a model graph, and $G$ an undirected graph.
Assume that, during the execution of the previously described algorithm for $\hph{H}$ with input $G$,  there is a vertex $v \in V(G)$ whose set of all possible labels is $L$ with $\card{L} \geqslant 2$. Then, $L$ is neither conflicting nor non-maximal.
\end{lem}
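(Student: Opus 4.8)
The plan is to exploit the initial $H$-isomorphic quadruplet $(x_A,x_B,x_C,x_D)$: since labels are only added and never removed during the execution, at every stage there is, for each label $P\in\{A,B,C,D\}$, at least one labeled vertex carrying $P$, namely $x_P$. I would combine this with the defining property of a possible label, which I would state explicitly first: a label $P$ belongs to $L$ exactly when, for every already-labeled vertex $u$ with label $Q$, the vertex $v$ is adjacent in $G$ to $u$ whenever $(P,Q)$ is a full edge of $H$, and nonadjacent to $u$ whenever $(P,Q)$ is a dotted edge of $H$. Both halves of the lemma would then be proved by contradiction, each time turning a putative conflict read off from $H$ into an impossible adjacency requirement on the single vertex $v$.

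To show that $L$ is not conflicting, I would assume $\Nfull{L}\cap\Ndot{L}\neq\emptyset$ and pick $w$ in this intersection. By definition there are labels $p,q\in L$ with $(p,w)$ full and $(q,w)$ dotted; these are necessarily distinct because $H$ is simple and a single edge cannot be both full and dotted, which is consistent with the hypothesis $\card{L}\geqslant 2$. Applying the possibility of $p$ to the labeled vertex $x_w$ forces $v$ to be adjacent to $x_w$, while applying the possibility of $q$ to the same $x_w$ forces $v$ to be nonadjacent to $x_w$, the required contradiction.

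To show that $L$ is not non-maximal, I would assume there is $L'$ with $L\varsubsetneq L'$, $\Nfull{L}=\Nfull{L'}$ and $\Ndot{L}=\Ndot{L'}$, fix some $r\in L'\setminus L$, and argue that $r$ is in fact a possible label for $v$, contradicting $r\notin L$. For this I take an arbitrary labeled vertex $u$ with label $Q$ and check the constraint that labeling $v$ with $r$ would impose. If $(r,Q)$ is full, then $Q\in\Nfull{\{r\}}\subseteq\Nfull{L'}=\Nfull{L}$, so some $p\in L$ has $(p,Q)$ full, and the possibility of $p$ already guarantees that $v$ is adjacent to $u$; the dotted case is symmetric via $\Ndot{\{r\}}\subseteq\Ndot{L}$, and a non-edge $(r,Q)$ imposes nothing. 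Thus labeling $v$ with $r$ violates no constraint, so $r\in L$.

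The single idea driving both parts is that the neighborhood equalities at the level of $H$ let me rewrite any constraint coming from outside $L$, or from a conflicting pair, as a constraint that is already enforced by some label inside $L$. I expect the main point requiring care to be the non-maximal direction: I must use the inclusions $\Nfull{\{r\}}\subseteq\Nfull{L}$ and $\Ndot{\{r\}}\subseteq\Ndot{L}$ in the right direction and confirm that the non-edge case is genuinely vacuous. Beyond this bookkeeping I anticipate no essential obstacle, precisely because the $H$-isomorphic quadruplet supplies a witness $x_w$ for every label and thereby removes any concern about a missing labeled vertex.
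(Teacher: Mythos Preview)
Your proposal is correct and follows essentially the same approach as the paper's proof: both parts proceed by contradiction, using the initially labeled witnesses $x_P$ to turn a conflict (resp.\ a non-maximality witness) in $H$ into an impossible adjacency requirement on $v$ (resp.\ into a proof that the extra label is in fact possible for $v$). The only cosmetic difference is that in the non-maximal case you argue directly that $r$ is possible by checking every labeled vertex, whereas the paper phrases this as a second nested contradiction (assume $r$ is not possible and derive a violated constraint); the underlying logic is identical.
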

\begin{proof}
Let $v$ be a vertex of $G$.
Let $L_{v}$ with $\card{L_{v}}\geqslant2$ denote, at some point in the execution of our algorithm, the set of those labels in $\{A, B, C, D\}$ that are possible for $v$. 
Assume for the sake of contradiction that $L_{v}$ is conflicting or non-maximal.
We consider each case in turn.
\begin{itemize}
\item 
Assume that $L_{v}$ is conflicting.
Then, there exists $R,S\in L_{v}$ and $P\in\{A,B,C,D\}$ such that $(R,P)$ is a full edge and $(S,P)$ is a dotted edge of $H$. 
From our initialization step, we know that there is a vertex $x_P$ of $G$ that is labeled by $P$.
Since the label~$R$ is possible for~$v$, it must be the case that $v$ is adjacent to $x_P$.
Since the label~$S$ is possible for $v$, it must be the case that $v$ is nonadjacent to $x_P$.
Therefore, $v$ is both adjacent and nonadjacent to $x_P$, a contradiction.
\item 
Assume that $L_{v}$ is non-maximal.
Then there exists $L'  \subseteq \{A, B, C, D\}$ such that $L_{v} \varsubsetneq L'$, $\Nfull{L_{v}} = \Nfull{L'}$ and $\Ndot{L_{v}} = \Ndot{L'}$. 
Since $L_{v} \varsubsetneq L'$, we can assume a label $P$ in $L'$ that is not in $L_{v}$.
We obtain the desired contradiction by showing that $P$ is a possible label for $v$, i.e., that $P\in L_{v}$ .
Assume towards another contradiction that $P$ is not possible for~$v$.
Then one of the following occurs:
\begin{enumerate}[(a)]
\item\label{it:full}
the model graph contains a full edge $(P,Q)$ and $v$ is nonadjacent to some vertex labeled $Q$; or
\item\label{it:dotted}
the model graph contains a dotted edge $(P,Q)$ and $v$ is adjacent to some vertex labeled $Q$.
\end{enumerate}
Assume~\eqref{it:full} (the case~\eqref{it:dotted} is symmetrical).
From $P\in L'$ and  $\Nfull{L'}=\Nfull{L_{v}}$, it follows that there exists $R\in L_{v}$ such that $(R,Q)$ is a full edge of the model graph. 
Since $R$ is possible for $v$, it follows that $v$ is adjacent to every vertex labeled $Q$, contradicting~\eqref{it:full}.
\end{itemize}
We have thus showed that in our algorithm, the set of possible labels for a vertex $v$ will never be conflicting or non-maximal.
\end{proof}

In Section~\ref{sec:datalog_strat}, we show how to encode our algorithmic approach in Datalog with stratified negation. However, we will first provide a straightforward guess-and-check program for $\hp$.

\section{Guess-and-Check Program for $\hp$}\label{sec:asp}

In all logic programs of this paper, the constants are in lowercase and the variables are in uppercase to match with Clingo syntax.
We use the binary predicate \texttt{e} to store the edges of the input graph $G$,
while the unary predicate \texttt{vertex} stores vertices.
The binary predicates \texttt{full} and  \texttt{dotted} encode, respectively, full and dotted edges of the model graph.
The labels are provided to the program as \texttt{partition(a)}, \texttt{partition(b)}, \texttt{partition(c)}, and \texttt{partition(d)}.

The following program follows the \emph{generate-and-test} (or \emph{guess-and-check}) methodology that is classical for problems in $\NP$: generate a candidate solution, and test whether it is a true solution, i.e., whether it satisfies all constraints.
In the following program, the generating part is the rule   
\texttt{1 $\lbrace$ placedIn(X,P) : partition(P) $\rbrace$ 1 :- vertex(X).}, which places every vertex in exactly one partition.
The testing part imposes that every partition must contain at least one vertex, and that the constraints of the model graph must be satisfied.

\begin{datalogpgm}
1 { placedIn(X,P) : partition(P) } 1 :- vertex(X).

filled(P) :- placedIn(X,P).
:- partition(P), not filled(P).  

:- placedIn(X,P), placedIn(Y,Q), full(P,Q), not adjacent(X,Y).
:- placedIn(X,P), placedIn(Y,Q), dotted(P,Q), adjacent(X,Y).
\end{datalogpgm}

\section{Datalog Programs for $\hp$}\label{sec:datalog_strat}

In this section, we show how to encode $\hph{H}$ in Datalog with stratified negation, for model graphs $H$ in Figure~\ref{fig:list_models}.
Section~\ref{sec:general} encodes the approach of~\cite{DBLP:journals/ita/DantasFGK05} in general.
We argue that this general encoding is correct for every model graph in Figure~\ref{fig:list_models} that is distinct from $\mg{7}$, $\mg{10}$, and $\mg{11}$.
We then show how to adjust the general encoding for these three model graphs.
Finally, we show that for model graphs $H$ with an isolated vertex,
non-recursive Datalog with negation suffices to check the existence of $H$-partitions.


\subsection{General Datalog Program}\label{sec:general}

Dantas et~al.~\cite{DBLP:journals/ita/DantasFGK05} showed that $\hph{H}$ can be solved in polynomial time for all model graphs  $H$ in Figure~\ref{fig:list_models}.
Following their algorithm and using Lemma~\ref{lem:correctness}, we check whether some initial valid labeling of four vertices, called a \textit{base}, can be extended to a complete labeling of all vertices.

\paragraph{Finding a base.} The algorithm loops over all quadruplets of distinct vertices and checks whether a quadruplet is $H$-isomorphic.
Our Datalog rules are as follows:

\begin{datalogpgm}
base(X,Y,Z,T) :- distinct(X,Y,Z,T), not problematic_base(X,Y,Z,T).
\end{datalogpgm}
Here, the predicate \texttt{distinct(X,Y,Z,T)} checks whether $X$, $Y$, $Z$ and $T$ are four distinct vertices of~$G$, and \texttt{problematic\_base} is defined, for each possible edge in the model graph~$H$, with two rules. For example, if we consider the edge $(a,b)$, the rules are:

\begin{datalogpgm}
problematic_base(X,Y,Z,T) :- dotted(a,b), e(X,Y), vert(Z), vert(T).
problematic_base(X,Y,Z,T) :- full(a,b), not e(X,Y), vert(X), vert(Y), vert(Z), 
                             vert(T).
\end{datalogpgm}
Symmetrical are added for edges $(a,c)$, $(a, d)$, $(b,c)$, $(b, d)$, and $(c, d)$.

\paragraph{Extension to a complete labeling.}
First, for every base $(I,J,K,L)$ of four vertices,
we label $I$ with $A$, $J$ with $B$, $K$ with $C$, and $L$ with $D$.

\begin{datalogpgm}
inPart(I,a,I,J,K,L) :- base(I,J,K,L).
inPart(J,b,I,J,K,L) :- base(I,J,K,L).
inPart(K,c,I,J,K,L) :- base(I,J,K,L).
inPart(L,d,I,J,K,L) :- base(I,J,K,L).
\end{datalogpgm}

We keep track that a vertex that has been labeled in a base must not be re-labeled later on.
\begin{datalogpgm}
done(I,I,J,K,L) :- base(I,J,K,L).
done(J,I,J,K,L) :- base(I,J,K,L).
done(K,I,J,K,L) :- base(I,J,K,L).
done(L,I,J,K,L) :- base(I,J,K,L).
\end{datalogpgm}

Next, rather than computing the set of possible labels for a yet unlabeled vertex~$X$, we use the predicate \texttt{imp(X,P,I,J,K,L)} with the meaning that the vertex  $X$ cannot be labeled by~$P\in\{A, B, C, D\}$, relative to the fixed base $(I,J,K,L)$.
The Datalog rules are straightforward.
The second rule, for example, expresses that for a full edge $(P,Q)$ in the model graph, if a vertex $Y$ has been labeled with~$Q$,
then a vertex $X$ that is nonadjacent to $Y$ cannot be labeled with~$P$. 
One should read ``prob'' as a shorthand for ``problematic.''

\begin{datalogpgm}
imp(X,P,I,J,K,L) :- vert(X), full_prob(X,P,I,J,K,L), not done(X,I,J,K,L).
full_prob(X,P,I,J,K,L) :- full(P,Q), inPart(Y,Q,I,J,K,L), not e(X,Y), vert(X).

imp(X,P,I,J,K,L) :- vert(X), dot_prob(X,P,I,J,K,L), not done(X,I,J,K,L).
dot_prob(X,P,I,J,K,L) :- dotted(P,Q), inPart(Y,Q,I,J,K,L), e(X,Y).
\end{datalogpgm}

For every vertex $X$ of $G$ that is not in the fixed base $(I,J,K,L)$, we now consider two possibilities:
\begin{enumerate}[(1)]
\item 
if three labels among $A$, $B$, $C$, and $D$ are impossible for $X$, then $X$ is labeled with the remaining fourth label; and
\item 
if all labels among $A$, $B$, $C$, and $D$ are impossible for $X$, then the base cannot be extended to a complete labeling.
\end{enumerate}
Finally, if some base never encounters the second case, the answer to $\hph{H}$ is ``yes''; otherwise the answer is ``no''.
Note incidentally that in a ``yes''-instance, there may be vertices~$X$ that are not labeled by the first case.

\begin{datalogpgm}
inPart(X,a,I,J,K,L) :- imp(X,b,I,J,K,L), imp(X,c,I,J,K,L), imp(X,d,I,J,K,L).
inPart(X,b,I,J,K,L) :- imp(X,a,I,J,K,L), imp(X,c,I,J,K,L), imp(X,d,I,J,K,L).
inPart(X,c,I,J,K,L) :- imp(X,a,I,J,K,L), imp(X,b,I,J,K,L), imp(X,d,I,J,K,L).
inPart(X,d,I,J,K,L) :- imp(X,a,I,J,K,L), imp(X,b,I,J,K,L), imp(X,c,I,J,K,L).

bad_init(I,J,K,L) :- vert(X), base(I,J,K,L), imp(X,a,I,J,K,L), 
                     imp(X,b,I,J,K,L), imp(X,c,I,J,K,L), imp(X,d,I,J,K,L).

yes_instance() :- base(I,J,K,L), not bad_init(I,J,K,L).
no_instance() :- not yes_instance().
\end{datalogpgm}

We have the following result.

\begin{thm}\label{the:correct}
If $H$ is one of the model graphs of Figure~\ref{fig:list_models} such that $H$ is distinct from model graphs~$\mg{7}$, $\mg{10}$, and~$\mg{11}$, then $\hph{H}$ is expressible in Datalog with stratified negation.
\end{thm}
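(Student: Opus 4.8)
The plan is to establish two things about the program exhibited in Section~\ref{sec:general}: first, that it is syntactically a Datalog program with \emph{stratified} negation; and second, that for every model graph $H$ distinct from $\mg{7}$, $\mg{10}$, and $\mg{11}$, its unique minimal model correctly decides $\hph{H}$. Since the rules are schematic (the predicates \texttt{full}, \texttt{dotted}, \texttt{e}, \texttt{vert}, \texttt{distinct} are instantiated from the fixed $H$ and the input $G$), I would treat $H$ as a fixed parameter and produce, for each admissible $H$, the concrete ground program.

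First I would verify stratifiability by exhibiting an explicit stratification. Negation occurs in the definitions of \texttt{base} (through \texttt{not problematic\_base} and inside \texttt{not e}), of \texttt{imp} (through \texttt{not done} and \texttt{not e}), and in the final three rules (through \texttt{not bad\_init} and \texttt{not yes\_instance}). I would order the predicates into strata so that every predicate appearing negatively in a rule body lies in a strictly lower stratum than the head: the extensional predicates and \texttt{distinct} at the bottom, then \texttt{problematic\_base}, then \texttt{base}, then \texttt{done}, then the mutually recursive block \texttt{inPart}/\texttt{full\_prob}/\texttt{dot\_prob}/\texttt{imp} (which recur on one another only \emph{positively}, so they may share a stratum), then \texttt{bad\_init}, then \texttt{yes\_instance}, then \texttt{no\_instance}. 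The one point requiring care is that \texttt{imp} depends on \texttt{inPart} and \texttt{inPart} in turn depends on \texttt{imp}: I must check that this cycle carries no negated edge, so that placing them in a common stratum is legitimate. Inspecting the four rules for \texttt{inPart} (single-label deduction) and the two pairs defining \texttt{imp} shows the recursion between them is purely positive, while the only negations crossing into this stratum (\texttt{not done}, \texttt{not e}) reach strictly lower strata. This yields a valid stratification, so the program has a polynomial-time-computable stratified minimal model.

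The substantive part is correctness: I would show \texttt{yes\_instance()} is derivable if and only if $(H,G)$ is a yes-instance. The soundness direction ($\texttt{yes\_instance}\Rightarrow$ yes-instance) follows by checking that whenever a base survives (no \texttt{bad\_init}), the labeling it induces through \texttt{inPart} is total and respects every full/dotted constraint; here Lemma~\ref{lem:correctness} does the heavy lifting, since it guarantees that the forced single-label propagation encoded by rules~(1) never has to contend with conflicting or non-maximal label sets, so a vertex with a unique possible label is labeled consistently and a vertex retaining $\geqslant 2$ possible labels can be assigned arbitrarily without violating a constraint. For completeness ($\text{yes-instance}\Rightarrow\texttt{yes\_instance}$), I would take an actual $H$-partition, read off from it an $H$-isomorphic quadruplet to serve as a base, and argue that the \texttt{imp} computation relative to that base never marks all four labels impossible for any vertex, so \texttt{bad\_init} is not derived and \texttt{yes\_instance()} fires.

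The hard part will be justifying the final claim, \emph{``if every unlabeled vertex retains at least two possible labels then $G$ is a yes-instance,''} precisely for the model graphs outside $\{\mg{7},\mg{10},\mg{11}\}$. This is exactly the step the excerpt flags as needing ``additional tests'' for those three excluded graphs, and it is where Lemma~\ref{lem:correctness} must be combined with a case analysis over the model graphs of Figure~\ref{fig:list_models}. The crux is that, by Lemma~\ref{lem:correctness}, any residual set $L$ of possible labels with $\card{L}\geqslant 2$ is neither conflicting nor non-maximal; I would argue that for each admissible $H$ such an $L$ corresponds to a genuinely free choice, so the vertices left unlabeled by rule~(1) can be assigned to any label in their residual sets while preserving all constraints, whereas for $\mg{7}$, $\mg{10}$, and $\mg{11}$ this free-choice property fails and extra global checks (the ones deliberately omitted here) would be required. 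I expect this model-graph case analysis, rather than the stratification check, to be the delicate core of the argument.
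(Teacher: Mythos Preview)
Your proposal is correct and follows the same two-pronged structure as the paper's proof: establish stratifiability of the program, then argue correctness via Lemma~\ref{lem:correctness} together with the analysis of~\cite{DBLP:journals/ita/DantasFGK05}. The paper's own proof is only a sketch---it points to the precedence graph (Figure~\ref{fig:PDG_conj_2_b}) for stratification and defers correctness entirely to~\cite{DBLP:journals/ita/DantasFGK05} and Lemma~\ref{lem:correctness}---whereas you spell out the stratification explicitly and outline the soundness/completeness argument in more detail; in particular, the model-graph case analysis you flag as the ``delicate core'' is precisely what the paper offloads to the cited reference rather than reproducing.
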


\begin{proof}[Proof sketch.]
Correctness follows from \cite{DBLP:journals/ita/DantasFGK05} and Lemma~\ref{lem:correctness}. 
The precedence graph of the program is given in Figure~\ref{fig:PDG_conj_2_b}.
\end{proof}

\begin{figure}
\begin{center}
\includegraphics[scale=0.4]{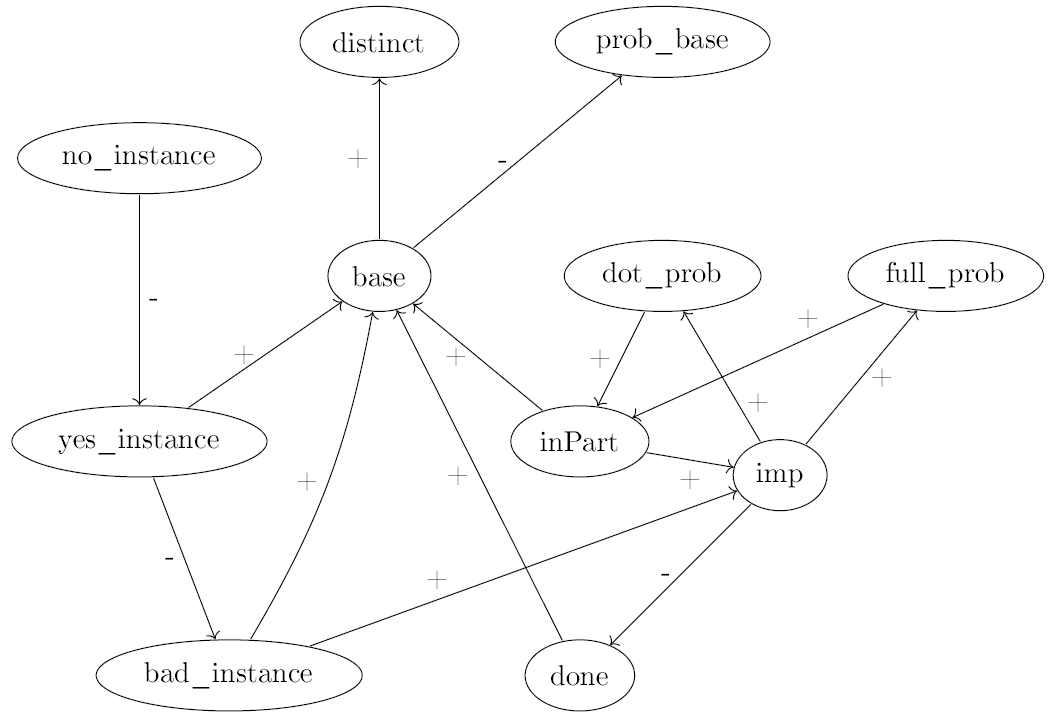}
\caption{Precedence Graph of the Datalog program.}\label{fig:PDG_conj_2_b}
\end{center}
\end{figure}

\subsection{Adjustment for Model Graph $\mg{7}$}

In~\cite{DBLP:journals/ita/DantasFGK05}, the model graph $\mg{7}$ of Figure~\ref{fig:list_models} is reduced to the model graph $K_2$. 
This is possible because the constraints for $A$ and $C$ are identical (i.e., \emph{being adjacent to both $B$ and $D$}), and the constraints for $B$ and $D$ are identical.
Then, every non-base vertex labeled by~$C$ can also be labeled by~$A$,
and every non-base vertex labeled by~$D$ can also be labeled by~$B$.
Our previous Datalog program would fail, however, for the reason that the set of possible labels for a  non-base vertex is never a singleton.

For example, consider the graph $G$ in Figure~\ref{fig:ex_model_7} with the base $(1, 2, 3, 4)$. The list of possible labels for the vertex $5$ is $AC$, and the list of possible labels for the vertex $6$ is $BD$. Our previous algorithm would terminate here and conclude that $G$ is a yes-instance. 
Note, however, that if vertex $5$ is labeled with $A$ (the case where we label~$5$ with~$C$ is symmetrical), it is impossible to label the vertex~$6$ with~$B$ or~$D$, because vertices~$5$ and~$6$ are non-adjacent. 

\begin{figure}\centering
\includegraphics[scale=0.3]{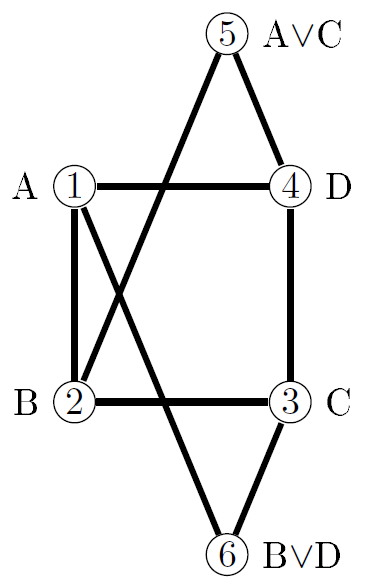}
\caption{The model graph $\mg{7}$ requires an additional test.}
\label{fig:ex_model_7}
\end{figure}

We adjust our Datalog program as follows: if the set of possible labels of a vertex~$X$ has size~ $2$, the new program will label $X$ with one of these two possibilities.
The rest of the program remains unchanged. 
That is, in the box that immediately precedes Theorem~\ref{the:correct}, we replace the four rules for the predicate \texttt{inPart} by the following two rules:

\begin{datalogpgm}
inPart(X,a,I,J,K,L) :- imp(X,b,I,J,K,L), imp(X,d,I,J,K,L).
inPart(X,b,I,J,K,L) :- imp(X,a,I,J,K,L), imp(X,c,I,J,K,L).
\end{datalogpgm}

\subsection{Adjustment for Model Graphs $\mg{10}$ and $\mg{11}$}

As in~\cite{DBLP:journals/ita/DantasFGK05}, the model graphs $\mg{10}$ and $\mg{11}$ of Figure~\ref{fig:list_models} need a special treatment.
We illustrate the issue by the model graph $\mg{10}$ and the graph $G$ given in Figure~\ref{fig:ex_model_10}~\emph{(right)}.
For the base $(1,2,3,4)$, the list of possible labels for the vertex $5$ is $BC$, and the list of possible labels for the vertex $6$ is $AD$. So our original algorithm would terminate and wrongly conclude that the base $(1,2,3,4)$ can be extended to a complete labeling. 
Nevertheless, we claim that this base cannot be extended to a complete labeling. Indeed, let us label the vertex~$5$ by~$B$ (the case where we label $5$ with $C$ is symmetrical).
Since $B$ is adjacent, through full edges, to both $A$ and $D$,
and since vertices~$5$ and~$6$ are nonadjacent,
we cannot label~$6$ with $A$ or $D$.
We note incidentally that the graph $G$ is a yes-instance through other bases, for example, $(1,2,3,6)$.

\begin{center}
\begin{figure}
\begin{center}
\setlength{\tabcolsep}{5\tabcolsep}
\begin{tabular}{cc}
\includegraphics[scale=0.3]{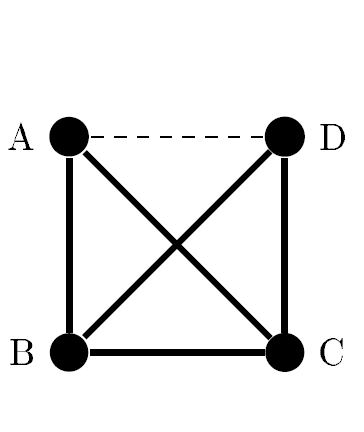}
&
\includegraphics[scale=0.3]{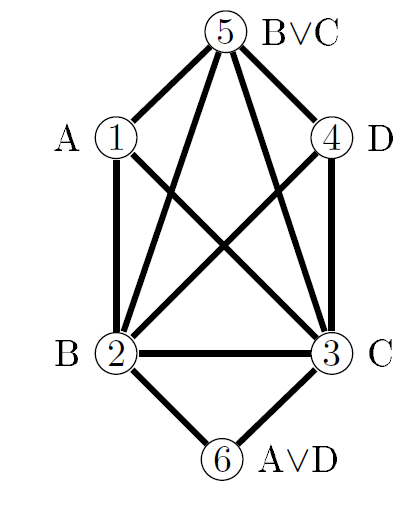}\\
$\mg{10}$ & \\
\end{tabular}
\end{center}
\caption{The model graph $\mg{10}$ requires an additional test.}
\label{fig:ex_model_10}
\end{figure}
\end{center}

It can be verified that for the model $\mg{10}$, the lists of possible labels are $AD$ and $BC$;  for the model $\mg{11}$, the lists of possible labels are $AC$ and $BD$.
We have the following result.

\begin{lem} (Cf.\ \cite{DBLP:journals/ita/DantasFGK05}) \label{lem:put_in_A}
Let $H$ be one of the model graphs~$\mg{10}$ or $\mg{11}$ in Figure~\ref{fig:list_models}.
If there exists a solution to $\hph{H}$ with a base $(i,j,k,l)$, then there is also a solution with the same base such that:
\begin{itemize}
\item 
each vertex having $A$ in its list of possible labels is labeled $A$; and
\item 
each vertex having $B$ in its list of possible labels is labeled $B$.
\end{itemize}
\end{lem}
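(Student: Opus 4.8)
The plan is to argue by a direct exchange (relabeling) argument: starting from an arbitrary solution for the base $(i,j,k,l)$, I would push every vertex towards the preferred label and then verify that the result is still a solution. The first thing I would record is the structural feature that makes the statement plausible. For both $\mg{10}$ and $\mg{11}$ the two non-trivial possible-label lists partition $\{A,B,C,D\}$ into two pairs --- $\{A,D\}$ and $\{B,C\}$ for $\mg{10}$, and $\{A,C\}$ and $\{B,D\}$ for $\mg{11}$ --- and, crucially, $A$ and $B$ lie in \emph{different} pairs. Reading $H$ off Figure~\ref{fig:list_models} (and consistently with the remark that in $\mg{10}$ both $B$ and, symmetrically, $C$ are full-adjacent to $A$ and to $D$), I would establish the key property: \emph{every edge of $H$ joining the two pairs is full}; in particular $(A,B)$ is a full edge of $H$. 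The within-pair (``diagonal'') edges, which are what actually distinguish $\mg{10}$ from $\mg{11}$, will play no role.

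Given this, let $\sigma$ be any solution with base $(i,j,k,l)$. I would define a new labeling $\sigma'$ by relabeling with $A$ every non-base vertex whose list of possible labels contains $A$, relabeling with $B$ every non-base vertex whose list contains $B$, and leaving all base vertices as well as every remaining non-base vertex unchanged. Because $A$ and $B$ belong to different pairs, no vertex is asked to take two labels, so $\sigma'$ is well defined; the base vertices keep their labels, so $\sigma'$ uses the same base; and the four classes remain nonempty, since each still contains its base vertex. By construction $\sigma'$ satisfies the two bullet conditions of the lemma, so it only remains to check that $\sigma'$ is a genuine $H$-partition.

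The hard part --- and essentially the only nontrivial point --- is the full-$(A,B)$ constraint between a vertex $u$ that was moved to $A$ and a vertex $w$ that was moved to $B$: I must show that $u$ and $w$ are adjacent. Here I would use that $\sigma$ is a valid solution. Since $\sigma$ extends the base, the label $\sigma(u)$ is consistent with the base and is therefore one of $u$'s possible labels, so $\sigma(u)$ lies in the pair containing $A$; likewise $\sigma(w)$ lies in the pair containing $B$. By the key property, $(\sigma(u),\sigma(w))$ is then a full edge of $H$, so validity of $\sigma$ forces $u$ adjacent to $w$, which is exactly what the full edge $(A,B)$ demands of $\sigma'$. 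Every remaining constraint of $\sigma'$ is routine: a constraint between a relabeled vertex and a base vertex is subsumed by the very definition of that label being possible for the vertex; a pair of vertices merged into a single class carries no constraint at all, so all within-pair/diagonal edges become vacuous; and the constraints among the base vertices hold because the base is $H$-isomorphic. Finally, any vertex whose list is a singleton is already at its forced label under $\sigma$ and is left untouched, so the case analysis is exhaustive. Assembling these observations shows that $\sigma'$ is a solution of the required form, establishing the lemma for both $\mg{10}$ and $\mg{11}$.
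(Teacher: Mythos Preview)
The paper does not supply its own proof of this lemma; it simply records the statement and refers the reader to~\cite{DBLP:journals/ita/DantasFGK05}. Your relabeling argument is therefore not competing against anything in the paper, and as a self-contained proof it is essentially correct: once one knows that in each of $\mg{10}$ and $\mg{11}$ the four cross-pair positions are all \emph{present and full}, the exchange $\sigma\mapsto\sigma'$ goes through exactly as you describe.

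Two small points are worth tightening. First, your ``key property'' is phrased as ``every edge of $H$ joining the two pairs is full,'' which literally allows some cross-pair positions to carry no edge at all; but your hard case needs the stronger fact that all four cross-pair positions are full edges, since from $\sigma(u)\in\{A,D\}$ (resp.\ $\{A,C\}$) and $\sigma(w)\in\{B,C\}$ (resp.\ $\{B,D\}$) you must conclude that $(\sigma(u),\sigma(w))$ \emph{is} a full edge, not merely that if it is an edge then it is full. State and check this explicitly from the figure. Second, your case analysis treats constraints between a relabeled vertex and a \emph{base} vertex, but at termination there may also be non-base vertices that were forced to a singleton label during the algorithm; these are unchanged by $\sigma'$ and must be handled too. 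The same one-line argument works---``$A$ is possible for $u$'' is computed relative to \emph{all} already-labeled vertices, base and forced alike---and, relatedly, the claim $\sigma(u)\in L_u$ uses that any solution extending the base must agree with the algorithm on every forced vertex (an easy induction). With these clarifications your proof is complete.
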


We now use Lemma~\ref{lem:put_in_A} to adjust our previous Datalog program of Section~\ref{sec:general}.  
First, we copy the labeled vertices.

\begin{datalogpgm}
label(X,a,I,J,K,L) :- inPart(X,a,I,J,K,L).
label(X,b,I,J,K,L) :- inPart(X,b,I,J,K,L).
label(X,c,I,J,K,L) :- inPart(X,c,I,J,K,L).
label(X,d,I,J,K,L) :- inPart(X,d,I,J,K,L).
\end{datalogpgm}

Next, if $B$ is an impossible label for a vertex but $A$ is not, the vertex is labeled by $A$. Symmetrically, if $A$ is an impossible label for a vertex but $B$ is not, the vertex is labeled by $B$.

\begin{datalogpgm}
label(X,a,I,J,K,L) :- not imp(X,a,I,J,K,L), imp(X,b,I,J,K,L).
label(X,b,I,J,K,L) :- imp(X,a,I,J,K,L), not imp(X,b,I,J,K,L).
\end{datalogpgm}

We now have to check that this labeling is correct according to the model graph. If not, we reject the initial base.

\begin{datalogpgm}
problem(X,P,I,J,K,L) :- label(X,P,I,J,K,L), full(P,Q), label(Y,Q,I,J,K,L), 
                        not e(X,Y).
problem(X,P,I,J,K,L) :- label(X,P,I,J,K,L), dotted(P,Q), label(Y,Q,I,J,K,L), 
                        e(X,Y).
bad_base(I,J,K,L) :- problem(X,P,I,J,K,L).
\end{datalogpgm}

Finally, the definition of \texttt{yes\_instance()} is changed as follows:

\begin{datalogpgm}
yes_instance() :- base(I,J,K,L), not bad_base(I,J,K,L), not bad_init(I,J,K,L).
no_instance() :- not yes_instance().
\end{datalogpgm}

Along the lines of Theorem~\ref{the:correct}, we obtain the following result.

\begin{thm}
If $H$ is one of the model graphs $\mg{10}$ or $\mg{11}$ in Figure~\ref{fig:list_models}, then $\hph{H}$ is expressible in Datalog with stratified negation.
\end{thm}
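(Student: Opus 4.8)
The plan is to follow the template of Theorem~\ref{the:correct}: derive correctness of the augmented program from Lemma~\ref{lem:put_in_A} (on top of Lemma~\ref{lem:correctness} and the algorithm of~\cite{DBLP:journals/ita/DantasFGK05}), and then confirm that the precedence graph of the augmented program is still stratified. The heart of the correctness argument is to show that the new predicate \texttt{label} computes exactly the canonical complete labeling guaranteed by Lemma~\ref{lem:put_in_A}, namely the one that labels $A$ every vertex having $A$ among its possible labels and labels $B$ every vertex having $B$ among its possible labels.

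To this end I would fix a base $(I,J,K,L)$ for which \texttt{bad\_init} fails, so that every vertex of $G$ has at least one possible label and the canonical labeling is total. By Lemma~\ref{lem:correctness} the set of possible labels of each non-base vertex is neither conflicting nor non-maximal, so for $H\in\{\mg{10},\mg{11}\}$ it is either a forced singleton or one of the two maximal lists recalled before the lemma ($AD$/$BC$ for $\mg{10}$, $AC$/$BD$ for $\mg{11}$). The first four \texttt{label} rules copy the singletons already deduced in \texttt{inPart}; on a vertex whose list is $AD$ (resp.\ $AC$) the rule with body \texttt{not imp(X,a,...), imp(X,b,...)} fires and assigns $A$, and on a vertex whose list is $BC$ (resp.\ $BD$) the symmetric rule assigns $B$. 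I would then check that this yields a total and single-valued labeling: the two forcing rules have mutually exclusive bodies, they agree with any singleton already forced in \texttt{inPart}, and when both $A$ and $B$ are impossible (e.g.\ a vertex forced to $C$ or $D$) neither forcing rule fires. Pinning down these boundary cases is the main obstacle, since everything downstream relies on \texttt{label} realizing precisely the Lemma~\ref{lem:put_in_A} labeling and on that labeling being well defined.

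With \texttt{label} identified, the predicates \texttt{problem} and \texttt{bad\_base} simply test whether this canonical labeling violates a full or dotted edge of $H$, so that \texttt{bad\_base(I,J,K,L)} holds exactly when the canonical labeling is not a solution. Lemma~\ref{lem:put_in_A} then supplies the decisive equivalence: the base $(I,J,K,L)$ extends to an $H$-partition iff its canonical labeling is a solution (the ``only if'' direction is Lemma~\ref{lem:put_in_A}, the ``if'' direction is immediate), i.e.\ iff neither \texttt{bad\_init(I,J,K,L)} nor \texttt{bad\_base(I,J,K,L)} holds. Consequently the revised rule \texttt{yes\_instance() :- base(I,J,K,L), not bad\_base(I,J,K,L), not bad\_init(I,J,K,L)} succeeds for some base iff $(H,G)$ is a yes-instance, which is what the program must compute.

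Finally I would verify stratification, which I expect to be routine. The augmentation only adds layers above those already stratified in Theorem~\ref{the:correct}: \texttt{label} depends positively on \texttt{inPart} and (both positively and negatively) on the already-saturated \texttt{imp}, \texttt{problem} depends positively on \texttt{label} and on the EDB relation \texttt{e}, \texttt{bad\_base} depends positively on \texttt{problem}, and the new \texttt{yes\_instance()} depends negatively on \texttt{bad\_base} and \texttt{bad\_init}, with \texttt{no\_instance()} depending negatively on \texttt{yes\_instance()}. Since no rule feeds any of these predicates back into the recursive block $\{\texttt{inPart},\texttt{imp},\texttt{full\_prob},\texttt{dot\_prob}\}$, no new cycle, and in particular no cycle through negation, is introduced; assigning \texttt{label}, \texttt{problem}, \texttt{bad\_base}, \texttt{yes\_instance()}, and \texttt{no\_instance()} to successively higher strata completes the stratification and hence the proof.
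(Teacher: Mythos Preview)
Your proposal is correct and matches the paper's approach exactly: the paper does not spell out a proof for this theorem but simply writes ``Along the lines of Theorem~\ref{the:correct}, we obtain the following result,'' and your plan is precisely that---invoke Lemma~\ref{lem:put_in_A} for correctness of the augmented \texttt{label}/\texttt{bad\_base} layer and then check that the enlarged precedence graph remains stratified. If anything, you supply considerably more detail (the case analysis on the possible-label lists and the verification that the new rules add no cycle through negation) than the paper itself does.
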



\subsection{Model Graphs with Isolated Vertices}\label{ssec:isolated}

A vertex in a model graph $H$ is \emph{isolated} if it is not adjacent to a full or dotted edge.
The model graphs in Figure~\ref{fig:list_models} with at least one isolated vertex are $\mg{1}$, $\mg{2}$, $\mg{3}$, $\mg{4}$, $\mg{22}$, and $\mg{29}$.
It can be easily seen that for these model graphs, $\hph{H}$ can be solved in non-recursive Datalog with negation (i.e., in predicate logic).

\begin{lem} \label{lem:isolated_vertex}
Let $G$ be an undirected graph. 
Let $H$ be a model graph with an isolated vertex.
Then the following are equivalent:
\begin{enumerate}
\item\label{it:quadruplet}
$G$ contains a $H$-isomorphic quadruplet of distinct vertices; and
\item \label{it:yesinstance}
$G$ is a yes-instance of the problem $\hph{H}$.
\end{enumerate}
\end{lem}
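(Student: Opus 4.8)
The plan is to prove the two implications separately. The implication (\ref{it:yesinstance}) $\Rightarrow$ (\ref{it:quadruplet}) is immediate and does not even need the hypothesis that $H$ has an isolated vertex, whereas the converse is exactly where that hypothesis does all the work. For (\ref{it:yesinstance}) $\Rightarrow$ (\ref{it:quadruplet}), I would assume $G$ is a yes-instance, so that there is a partition $V_A, V_B, V_C, V_D$ of $V(G)$ into non-empty classes respecting all constraints of $H$. Since each class is non-empty, I would choose representatives $x_A \in V_A$, $x_B \in V_B$, $x_C \in V_C$, $x_D \in V_D$; these are four distinct vertices. I would then argue that $(x_A, x_B, x_C, x_D)$ is $H$-isomorphic: for every full edge $(p,q)$ of $H$ the representatives $x_p$ and $x_q$ are adjacent (because the partition forces every vertex of $V_p$ to be adjacent to every vertex of $V_q$), and for every dotted edge they are nonadjacent. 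Hence the subgraph of $G$ induced by the four representatives is a yes-instance of $\hph{H}$ under the labeling sending each $x_p$ to $p$.

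For the essential direction (\ref{it:quadruplet}) $\Rightarrow$ (\ref{it:yesinstance}), I would use the isolated vertex. Because model graphs are considered up to isomorphism, I may assume without loss of generality that $D$ is an isolated vertex of $H$, i.e., no full or dotted edge of $H$ is incident to $D$. Let $(x_A, x_B, x_C, x_D)$ be an $H$-isomorphic quadruplet of distinct vertices. I would define a partition of $V(G)$ by keeping the first three representatives as singletons and dumping everything else into the class of the isolated label:
\[
V_A = \{x_A\}, \quad V_B = \{x_B\}, \quad V_C = \{x_C\}, \quad V_D = \{x_D\} \cup \left(V(G) \setminus \{x_A, x_B, x_C, x_D\}\right).
\]
These four classes are pairwise disjoint and all non-empty, so they constitute a legitimate partition of $V(G)$ into four labeled classes.

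It remains to verify that this partition respects every constraint imposed by $H$, which I would split into two groups. First, since $D$ is isolated, every full or dotted edge of $H$ is of the form $(p,q)$ with $p,q \in \{A,B,C\}$; such a constraint concerns only the singleton classes $V_A, V_B, V_C$, and it holds precisely because the quadruplet $(x_A, x_B, x_C, x_D)$ is $H$-isomorphic, so the induced subgraph on these vertices already satisfies it. Second, no constraint is incident to the class $V_D$, again because $D$ is isolated; therefore the (possibly many) vertices placed into $V_D$ impose no adjacency or nonadjacency requirement whatsoever and cannot cause a violation. Consequently the partition is a solution, and $G$ is a yes-instance.

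I do not expect a genuine obstacle here: the construction is forced by the structure of the problem, and the isolated label $D$ functions as a constraint-free \emph{dumping ground} that absorbs all non-representative vertices. The only points needing care are the clean justification that representatives of a valid partition always form an $H$-isomorphic quadruplet, and the careful invocation of the ``up to isomorphism'' convention to rename the isolated vertex as $D$ so that the singleton-plus-dump construction is available.
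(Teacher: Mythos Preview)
Your proof is correct and follows essentially the same approach as the paper: for the nontrivial direction you place all non-base vertices into the class of the isolated label, and for the trivial direction you pick one representative from each part of a valid $H$-partition. The paper's write-up is terser and uses a generic isolated label $P$ rather than renaming it to $D$, but the idea is identical.
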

\begin{proof}
\framebox{\ref{it:quadruplet}$\implies$\ref{it:yesinstance}}
Let $P$ be an isolated vertex of $H$ and $(i,j,k,l)$ be a quadruplet of distinct vertices of $G$ which are $H$-isomorphic.
We can label the other vertices of $G$ by $P$ and we obtain a complete labeling of $G$. Indeed, as $P$ is isolated, there are only constraints including the vertices $(i, j, k, l)$, but these constraints are fulfilled because the quadruplet $(i,j,k,l)$ is $H$-isomorphic.

\framebox{\ref{it:yesinstance}$\implies$\ref{it:quadruplet}}
The other direction of the equivalence is trivial because $G$ must, at least, contain a base to be a yes-instance.
\end{proof}

From Section~\ref{sec:general}, it follows that non-recursive Datalog suffices to test the existence of a $H$-isomorphic quadruplet of distinct vertices. 
Thus, we obtain the following result.

\begin{thm}
If $H$ is a model graph with an isolated vertex, then $\hph{H}$ is expressible in non-recursive Datalog with negation.
\end{thm}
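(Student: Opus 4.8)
The plan is to combine Lemma~\ref{lem:isolated_vertex} with the observation that the base-finding fragment of the program in Section~\ref{sec:general} is already non-recursive. By Lemma~\ref{lem:isolated_vertex}, when $H$ has an isolated vertex, $G$ is a yes-instance of $\hph{H}$ precisely when $G$ contains an $H$-isomorphic quadruplet of distinct vertices; equivalently, when the predicate \texttt{base} derives at least one tuple. Hence it suffices to exhibit a non-recursive program with negation that derives \texttt{yes\_instance()} exactly when some \texttt{base}-tuple exists.

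First I would isolate, from the general program of Section~\ref{sec:general}, only the rules defining \texttt{distinct}, \texttt{problematic\_base}, and \texttt{base}, discarding all rules that perform the extension to a complete labeling (the \texttt{inPart}, \texttt{imp}, \texttt{full\_prob}, \texttt{dot\_prob}, \texttt{done}, and \texttt{bad\_init} rules). The key point is that these discarded rules are precisely the source of recursion in the general program: \texttt{imp} depends on \texttt{inPart}, while \texttt{inPart} in turn depends on \texttt{imp}. Once they are removed, what remains defines \texttt{base} solely in terms of the EDB predicates \texttt{e}, \texttt{full}, \texttt{dotted}, and \texttt{vert}, through \texttt{distinct} and \texttt{problematic\_base}, with negation applied only to \texttt{problematic\_base}.

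Next I would verify that this remaining fragment is non-recursive by inspecting its precedence graph: \texttt{problematic\_base} and \texttt{distinct} depend only on EDB predicates, and \texttt{base} depends on these two, so the graph is acyclic and the single negated edge (from \texttt{base} to \texttt{problematic\_base}) introduces no cycle. I would then close the program with the two rules
\begin{datalogpgm}
yes_instance() :- base(I,J,K,L).
no_instance() :- not yes_instance().
\end{datalogpgm}
which add only two further acyclic dependencies, so the whole program stays within non-recursive Datalog with negation.

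Finally, I would argue correctness: \texttt{base(I,J,K,L)} is derivable iff $(I,J,K,L)$ is an $H$-isomorphic quadruplet of distinct vertices, exactly as the base rules were designed to capture in Section~\ref{sec:general}. Therefore \texttt{yes\_instance()} is derived iff such a quadruplet exists, which by Lemma~\ref{lem:isolated_vertex} holds iff $G$ is a yes-instance of $\hph{H}$. The only real subtlety --- and the step I would treat most carefully --- is the soundness of dropping the extension rules, which is precisely where Lemma~\ref{lem:isolated_vertex} is needed: for an isolated label $P$, the mere existence of a base already guarantees extendability (label every remaining vertex by $P$), so no recursive label-propagation is required and the non-recursive fragment alone decides the problem.
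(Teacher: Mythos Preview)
Your proposal is correct and follows essentially the same route as the paper: the paper's justification for this theorem is the single sentence preceding it, namely that the base-finding rules of Section~\ref{sec:general} already form a non-recursive Datalog program with negation, combined with Lemma~\ref{lem:isolated_vertex}. You have simply spelled out that argument in full detail---identifying which rules to keep, checking acyclicity of the precedence graph, and appending the \texttt{yes\_instance}/\texttt{no\_instance} rules---which is more than the paper itself does.
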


\section{Experiments} \label{sec:gen_instances}

To compare the efficiency of our programs in an existing answer set solver,
we need a generator for yes-instances and no-instances of arbitrary size.
The motivation for distinguishing between yes-and no-instances comes from the asymmetry of $\NP$-problems: on a yes-instance, an algorithm can stop as soon as the first $H$-partition is found; but on no-instances no such early stopping is possible.   
In this section, we discuss generators for yes- and no-instances, and then we report on execution times of our programs on these instances.


\subsection{Generating Yes-Instances}
Let $n$, $m$ be two positive integers, and $H$ a model graph.
We want to build a graph $G$ with $n$~vertices and $m$~edges that is a yes-instance for the problem $\hph{H}$.
Necessarily, $n\geqslant4$ and $m\leq\frac{n(n-1)}{2}$, where the latter is the number of edges in the complete graph with~$n$ vertices.

We denote $\INpos=\{1,2,3,\ldots\}$.
Let $a ,b, c, d \in \INpos$ such that $a+b+c+d=n$. One can wonder if there exists a yes-instance with $m$ edges where the numbers of vertices labeled by $A$, $B$, $C$, and $D$ are, respectively, $a$, $b$, $c$, and $d$.

\begin{definition}
We define $m_{\mini}^H(a,b,c,d)$ as the smallest number $k$ such that some graph~$G$ with $\card{E(G)}=k$ admits a $H$-partition $(V_A, V_B, V_C, V_D)$ such that $\card{V_A} = a$, $\card{V_B} = b$, $\card{V_C} = c$, and $\card{V_D} = d$.
Symmetrically, we define $m_{\maxi}^H(a,b,c,d)$ as the largest number $k$ such that some graph~$G$ with $\card{E(G)} = k$ admits a $H$-partition $(V_A, V_B, V_C, V_D)$ such that $\card{V_A} = a$, $\card{V_B} = b$, $\card{V_C} = c$, and $\card{V_D} = d$.
\qed
\end{definition}

We write $p$ for a label and also for the number of vertices labeled with this label. If the context is not clear, we write $\#p$ for the number of vertices labeled by $p$.
We have the following result.


\begin{lem}\label{lem:min_max}
Let $H$ be a model graph.
Let $a, b, c, d \in \INpos$ such that $a+b+c+d=n$.
Then,
\begin{align}
m_{\mini}^H(a, b, c, d) &= \sum_{(p, q) \in \Full{H}} \#p \cdot \#q\label{eq:min_max_one}\\
m_{\maxi}^H(a, b, c, d) &= \frac{n (n-1)}{2} - \sum_{(p, q) \in \Dotted{H}} \#p \cdot \#q
\label{eq:min_max_two}
\end{align}
where $\Full{H}$ is the set of the full edges of $H$, and $\Dotted{H}$ is the set of the dotted edges of~$H$.
\end{lem}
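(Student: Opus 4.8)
The plan is to fix an arbitrary labeling of the $n$ vertices into four nonempty classes $V_A, V_B, V_C, V_D$ of the prescribed sizes $a, b, c, d$ (nonemptiness is guaranteed since $a,b,c,d\in\INpos$), and to classify each of the $\frac{n(n-1)}{2}$ unordered vertex pairs according to the constraint that $H$ imposes on it. A pair $\{u,v\}$ with $u \in V_p$ and $v \in V_q$ is \emph{forced} if $(p,q)$ is a full edge of $H$ (the pair must be an edge of $G$), \emph{forbidden} if $(p,q)$ is a dotted edge of $H$ (the pair must be a non-edge of $G$), and \emph{free} otherwise; this last group comprises the within-class pairs ($p=q$) together with the cross-class pairs for which $(p,q)$ is a non-edge of $H$. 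The key reformulation of the definition of $\hph{H}$ is then: a graph $G$ on these vertices admits the chosen labeling as an $H$-partition if and only if its edge set contains all forced pairs, none of the forbidden pairs, and an arbitrary subset of the free pairs.

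Next I would count the categories. Each full edge $(p,q)$ of $H$ contributes exactly $\#p\cdot\#q$ cross-class pairs, so the number of forced pairs equals $\sum_{(p,q)\in\Full{H}} \#p\cdot\#q$; dually, the number of forbidden pairs equals $\sum_{(p,q)\in\Dotted{H}} \#p\cdot\#q$. The crucial point for what follows is that both of these counts depend only on the class sizes $a,b,c,d$, and not on which particular vertices are placed in which class.

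To turn this into the claimed formulas, observe that once the labeling is fixed, the edge count of a compatible graph $G$ is $(\text{forced count}) + \card{S}$, where $S$ is the freely chosen subset of free pairs. This ranges from the forced count (taking $S=\emptyset$) up to the forced count plus the number of free pairs, which equals $\frac{n(n-1)}{2}$ minus the forbidden count (taking $S$ to be all free pairs). To upgrade these to statements about $m_{\mini}^H$ and $m_{\maxi}^H$, which quantify over all graphs together with all compatible labelings, I would invoke the size-invariance established above: with respect to any witnessing labeling of sizes $a,b,c,d$, a compatible graph $G$ contains all forced pairs and no forbidden pair, so $\card{E(G)}\geqslant \sum_{(p,q)\in\Full{H}} \#p\cdot\#q$ and $\card{E(G)}\leqslant \frac{n(n-1)}{2}-\sum_{(p,q)\in\Dotted{H}} \#p\cdot\#q$. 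Since these bounds are the same for every such labeling, they hold uniformly over all compatible graphs, and both extremes are realized by the explicit graphs described (edge set equal to the forced pairs for the minimum; edge set equal to all non-forbidden pairs for the maximum), giving equations~\eqref{eq:min_max_one} and~\eqref{eq:min_max_two}.

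I expect no serious obstacle: the argument is a clean counting of a tripartition of vertex pairs. The only points demanding care are the bookkeeping that verifies forced, forbidden, and free genuinely partition all $\frac{n(n-1)}{2}$ pairs — in particular that within-class pairs and $H$-non-edge cross-class pairs both land in the unconstrained free category — and the observation that the forced and forbidden counts depend only on $(a,b,c,d)$, which is exactly what licenses settling the extremal values over all graphs by reasoning about a single fixed labeling.
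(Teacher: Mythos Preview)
Your proposal is correct and follows essentially the same approach as the paper: build the extremal graphs $G_{\mini}$ (edge set equal to the forced pairs) and $G_{\maxi}$ (edge set equal to the non-forbidden pairs), verify that the intended labeling is an $H$-partition of each, and argue that these realize the extreme edge counts. Your explicit invocation of the size-invariance of the forced and forbidden counts across all labelings of sizes $(a,b,c,d)$ is in fact a cleaner way to establish the bounds than the paper's contradiction argument, which is a bit loose in identifying the witnessing labeling of the hypothetical smaller graph~$G'$ with the canonical labeling of $G_{\mini}$.
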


\begin{proof}
We  give a proof of~\eqref{eq:min_max_one}.
We first show that there exists an undirected graph $G_{\mini}$ with $\sum_{(p,q) \in \Full{H}} \#p \cdot \#q$ edges and $n$~vertices that has a solution where the numbers of vertices labeled by $A$, $B$, $C$, and $D$ are, respectively, $a$, $b$, $c$, and $d$.
Let $G_{\mini}$ be an undirected graph whose vertices are 
$$V(G_{\mini}) = \{A_1, A_2, \dots, A_a, B_1, \dots, B_b, C_1, \dots,  C_c, D_1, \dots, D_d\}$$
 and the edges of $G_{\mini}$ are defined as follows: for $p_i, q_j \in V(G_{\mini})$, $(p_i, q_j)$ is an edge of $G_{\mini}$ if and only if $(p, q)$ is a full edge of $H$.

Since for each full edge $(p,q)$ in $H$, each vertex labeled with $p$ is connected to each vertex labeled with $q$, there are $\#p \cdot \#q$ such edges. Moreover, there are no other edges than those described here.
As we sum up these values for each full edge $(p,q)$, we obtain the desired number of edges in $G_{\mini}$.

We claim that this labeling satisfies all the constraints.
The full constraints are satisfied by construction. 
The dotted constraints are also satisfied because $G_{\mini}$ contains only edges connecting two vertices whose labels are adjacent through a full edge in $H$.

It remains to be shown that there exists no graph with strictly less than $\sum_{(p, q) \in \Full{H}} \#p \cdot \#q$  edges that is a yes-instance where the number of vertices labeled by $A$, $B$, $C$, and $D$ are, respectively, $a$, $b$, $c$, and $d$.
Assume for the sake of contradiction that there exists such a graph~$G'$. We can assume without loss of generality that $G'$ and $G_{\mini}$ have the same set of vertices.
Since $G'$ has strictly less edges than $G_{\mini}$, there is an edge $(u,v)$ in $G_{\mini}$ that is not in $G'$. 
Since $G'$ is a yes-instance, it has a solution.
Let $p$ be the label of $u$ in this solution, and $q$ the label of $v$.
Since $(u,v)$ is an edge of $G_{\mini}$, we have that $(p,q)$ is a full edge of $H$. So, we have two vertices $u$ and $v$ that are labeled by $p$ and $q$ respectively, but there is no edge in $G'$ between $u$ and $v$. This is a contradiction with the fact that $G'$ is a yes-instance.
This concludes the proof of ~\eqref{eq:min_max_one}. 

The proof of \eqref{eq:min_max_two} is similar because it suffices to consider $G_{\maxi}$ which is obtained from a complete graph with $n$ edges by removing all the edges between two vertices whose labels are connected by a dotted edge in $H$.
\end{proof}

For example, let us fix $a, b, c, d \in \INpos$ with the model graph $(23)$ of Figure~\ref{fig:list_models} and $n = a+b+c+d$.
According to the Figure~\ref{fig:model_23}, we have:
$$m_{\mini}^{(23)}(a, b, c, d) = ab + bc$$
$$m_{\maxi}^{(23)}(a, b, c, d) = \frac{n (n-1)}{2} - ac - bd - cd$$
Moreover, we claim that, for all $m$ such that $m_{\mini}^{(23)}(a, b, c, d) \leqslant m \leqslant m_{\maxi}^{(23)}(a, b, c, d)$, there exists a graph $G$ with $n$ vertices and $m$ edges such that $G$ is a yes-instance for a labeling where the numbers of vertices labeled by $A$, $B$, $C$, and $D$ are, respectively, $a$, $b$, $c$, and $d$. The idea is that we can extend the graph for $m_{\mini}^{(23)}(a, b, c, d)$ by adding edges until we reach $m$ edges such that every added edge does not violate any dotted constraint.
We will now formalize this construction.

\begin{figure}
\begin{center}
\includegraphics[scale=0.5]{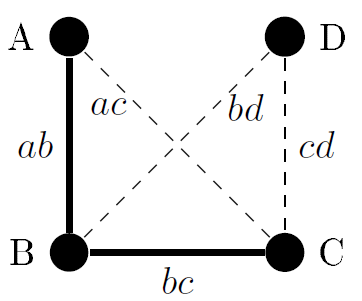}
\caption{Model graph $\mg{23}$.}\label{fig:model_23}
\end{center}
\end{figure}

\begin{definition}
Let $G$ and $G'$ be two undirected graphs,
We say that $G$ is an \emph{extension} of $G'$, denoted $G' \subseteq G$, if $V(G') = V(G)$ and $E(G') \subseteq E(G)$. If $G' \subseteq G$, we also say that $G'$ is \emph{extended by} $G$.
\qed
\end{definition}

In other words, $G$ is an extension of $G'$ if it is possible to obtain $G$ from $G'$ by adding some edges.
One can easily see that $G_{\mini}\subseteq G_{\maxi}$. Indeed, we take $G_{\mini}$ and consider the labeling induced by the construction of $G_{\mini}$. We can add all possibles edges that do not violate any dotted constraint. In this way, we obtain $G_{\maxi}$ by definition.

\begin{thm}\label{the:sandwich}
Let $H$ be a model graph.
Let $a, b, c, d \in \INpos$ and $n = a+b+c+d$.
Let $G_{\mini}$ and $G_{\maxi}$ be the graphs with $n$ vertices and respectively $m_{\mini}^{H}(a, b, c, d)$ and $m_{\maxi}^{H}(a, b, c, d)$ edges defined in (the proof of) Lemma~\ref{lem:min_max}.
Every graph $G$ such that $G_{\mini} \subseteq G \subseteq G_{\maxi}$ has a solution $(V_A, V_B, V_C, V_D)$ where $\card{V_A} = a$, $\card{V_B} = b$, $\card{V_C} = c$, and $\card{V_D} = d$.
\end{thm}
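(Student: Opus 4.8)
The plan is to avoid any re-labeling or extension argument and simply show that the \emph{canonical} labeling carried by the vertex names is already a solution for every sandwiched graph~$G$. Recall from the proof of Lemma~\ref{lem:min_max} that $G_{\mini}$, $G$, and $G_{\maxi}$ all share the common vertex set $\{A_1, \dots, A_a, B_1, \dots, B_b, C_1, \dots, C_c, D_1, \dots, D_d\}$, since the relation $\subseteq$ fixes the vertex set and only adds edges. I would therefore take $V_A = \{A_1, \dots, A_a\}$ and define $V_B$, $V_C$, $V_D$ analogously. Because $a, b, c, d \in \INpos$, each class is non-empty, the four classes are pairwise disjoint, and together they partition $V(G)$ with exactly the prescribed cardinalities $\card{V_A} = a$, $\card{V_B} = b$, $\card{V_C} = c$, $\card{V_D} = d$. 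The only remaining task is to verify the two families of adjacency constraints read off from~$H$.

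The key observation is that the two bounds of the sandwich control the two kinds of constraints separately and complementarily. For a full edge $(p,q) \in \Full{H}$ and any $u \in V_p$, $v \in V_q$, the construction of $G_{\mini}$ puts $(u,v)$ into $E(G_{\mini})$; since $G_{\mini} \subseteq G$ gives $E(G_{\mini}) \subseteq E(G)$, this edge survives in $G$, so $u$ and $v$ are adjacent. Symmetrically, for a dotted edge $(p,q) \in \Dotted{H}$ and any $u \in V_p$, $v \in V_q$, the construction of $G_{\maxi}$ (the complete graph with every dotted-forbidden edge deleted) makes $(u,v)$ a non-edge of $G_{\maxi}$; since $G \subseteq G_{\maxi}$ gives $E(G) \subseteq E(G_{\maxi})$, no such edge can appear in $G$, so $u$ and $v$ are nonadjacent. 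Pairs of vertices carrying the same label need not be examined, as $\hp$ imposes conditions only on pairs of \emph{distinct} labels $p \neq q$. Hence the canonical partition satisfies all full and dotted constraints and is a solution of the required shape.

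I expect essentially no deep obstacle here; the argument is a direct bookkeeping of which edges must be present and which must be absent. The two points requiring care are, first, making explicit that full-edge constraints are governed purely \emph{from below} ($G \supseteq G_{\mini}$) and dotted-edge constraints purely \emph{from above} ($G \subseteq G_{\maxi}$), so the lower and upper bounds never interfere; and second, confirming that Lemma~\ref{lem:min_max} defines $G_{\mini}$ and $G_{\maxi}$ over literally the same labeled vertex set, which is exactly what the remark $G_{\mini} \subseteq G_{\maxi}$ preceding the theorem already asserts. The non-emptiness of the classes should also be flagged as coming from $a,b,c,d \geqslant 1$, which is what distinguishes a genuine $H$-partition from a degenerate labeling.
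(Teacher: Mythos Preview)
Your proposal is correct and follows essentially the same approach as the paper: take the canonical labeling inherited from the construction of $G_{\mini}$, then observe that full constraints are preserved because $G_{\mini}\subseteq G$ and dotted constraints are preserved because $G\subseteq G_{\maxi}$. Your write-up is slightly more explicit about non-emptiness and the irrelevance of same-label pairs, but the argument is the same.
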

\begin{proof}
Let us consider the same labeling of the vertices in $G$ as in $G_{\mini}$. So the number of vertices labeled by $A$, $B$, $C$, and $D$ are, respectively, $a$, $b$, $c$, and $d$.
It remains to be shown that this labeling satisfies all the constraints.

\begin{itemize}
\item 
Let $(p,q)$ be a full edge of $H$. 
We have to show that all the vertices labeled by $p$ are adjacent to all vertices labeled by $q$. This holds true because $G$ is an extension of $G_{\mini}$ and $G_{\mini}$ already satisfies this constraint.
\item 
Let $(p,q)$ be a dotted edge of $H$.
We have to show that all the vertices labeled by $p$ are nonadjacent to all vertices labeled by $q$. This holds true because $G_{\maxi}$ is an extension of $G$ and $G_{\maxi}$ already satisfies this constraint.
\end{itemize}
All the constraints are satisfied and, consequently, $G$ is a yes-instance. 
\end{proof}

This theorem has a natural corollary.

\begin{corol} \label{cor:m}
Let $H$ be a model graph, and $a, b, c, d \in \INpos$ such that $n=a+b+c+d$.
Let $m$ be an integer such that $m_{\mini}^{H}(a, b, c, d) \leqslant m \leqslant m_{\maxi}^{H}(a, b, c, d)$.
There is an undirected graph $G$ with $n$ vertices and $m$ edges such that $G$ has a solution where the numbers of vertices labeled by $A$, $B$, $C$, and $D$ are, respectively, $a$, $b$, $c$, and $d$.
\end{corol}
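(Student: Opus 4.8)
The plan is to reduce the statement to Theorem~\ref{the:sandwich} by exhibiting, for each admissible value of $m$, a single graph on $n$ vertices with exactly $m$ edges that is sandwiched between $G_{\mini}$ and $G_{\maxi}$. Since Theorem~\ref{the:sandwich} already guarantees that every such sandwiched graph is a yes-instance with the prescribed label counts, the only thing left to do is an edge-counting argument.

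First I would recall from the discussion immediately preceding Theorem~\ref{the:sandwich} that $G_{\mini} \subseteq G_{\maxi}$, i.e., $V(G_{\mini}) = V(G_{\maxi})$ and $E(G_{\mini}) \subseteq E(G_{\maxi})$. Because these two graphs have, respectively, $m_{\mini}^{H}(a,b,c,d)$ and $m_{\maxi}^{H}(a,b,c,d)$ edges, the set of "missing" edges satisfies $\card{E(G_{\maxi}) \setminus E(G_{\mini})} = m_{\maxi}^{H}(a,b,c,d) - m_{\mini}^{H}(a,b,c,d)$.

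Next, given $m$ with $m_{\mini}^{H}(a,b,c,d) \leqslant m \leqslant m_{\maxi}^{H}(a,b,c,d)$, I set $r \defeq m - m_{\mini}^{H}(a,b,c,d)$, so that $0 \leqslant r \leqslant \card{E(G_{\maxi}) \setminus E(G_{\mini})}$. Since any finite set admits a subset of every cardinality between $0$ and its own size, I may choose some $F \subseteq E(G_{\maxi}) \setminus E(G_{\mini})$ with $\card{F} = r$, and let $G$ be the graph with $V(G) = V(G_{\mini})$ and $E(G) = E(G_{\mini}) \cup F$. By construction $G_{\mini} \subseteq G \subseteq G_{\maxi}$, and $\card{E(G)} = m_{\mini}^{H}(a,b,c,d) + r = m$. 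Applying Theorem~\ref{the:sandwich} to $G$ then yields a solution $(V_A, V_B, V_C, V_D)$ with $\card{V_A} = a$, $\card{V_B} = b$, $\card{V_C} = c$, and $\card{V_D} = d$, which is exactly the claim.

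I do not expect any genuine obstacle here: the entire content beyond Theorem~\ref{the:sandwich} is the elementary observation that one can realize every intermediate edge count by adding an appropriately sized subset of the missing edges to $G_{\mini}$. The corollary is essentially a restatement of Theorem~\ref{the:sandwich} that makes explicit the consequence for the \emph{number} of edges rather than for a fixed sandwiched graph.
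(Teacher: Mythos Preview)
Your proposal is correct and follows the same approach as the paper: reduce to Theorem~\ref{the:sandwich} by noting that $G_{\mini}\subseteq G_{\maxi}$ have the required edge counts and picking a sandwiched graph with exactly $m$ edges. You are simply more explicit than the paper about how such a $G$ is obtained (choosing an $r$-element subset of $E(G_{\maxi})\setminus E(G_{\mini})$), whereas the paper merely asserts that one can ``consider'' such a graph.
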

\begin{proof}
We have already showed in the proof of Lemma \ref{lem:min_max} that $G_{\mini}$ and $G_{\maxi}$ have respectively $m_{\mini}^{H}(a, b, c, d)$ and $m_{\maxi}^{H}(a, b, c, d)$ edges. So, we just have to consider a graph $G$ with $m$ vertices such that $G_{\mini} \subseteq G \subseteq G_{\maxi}$ and use Theorem~\ref{the:sandwich}.
\end{proof}

Based on Corollary~\ref{cor:m}, we introduce the following generator of yes-instances.

\paragraph{Generator for yes-instances}

\begin{description}
\item[Input:] A model graph $H$; two positive integers $n$ and $m$ such that $n\geqslant4$.
\item[Output:] Yes-instances with $n$ vertices and $m$ edges.
\item[Generator:]
For every quadruplet $(a,b,c,d)$ of positive integers such that $a+b+c+d=n$ and  $m_{\mini}^{H}(a, b, c, d) \leqslant m \leqslant m_{\maxi}^{H}(a, b, c, d)$,
\begin{quote}
build a yes-instance from $G_{\mini}$ by adding edges that do not violate any dotted constraint, until the number of edges reaches $m$;
\end{quote}
if no such quadruplet exists, return \emph{``there is no such yes-instance.''}
\end{description}

\subsection{Generating No-Instances}

\begin{figure*}
\begin{center}
\setlength{\tabcolsep}{3\tabcolsep}
\begin{tabular}{ccc}
\includegraphics[scale=0.3]{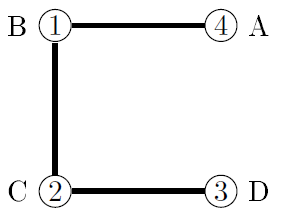}
&
\includegraphics[scale=0.3]{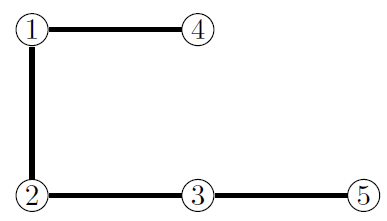}
&
\includegraphics[scale=0.3]{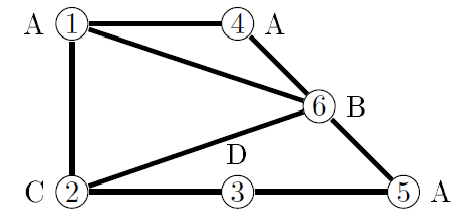}
\\
$G_{1}$ & $G_{2}$ & $G_{3}$
\end{tabular}
\end{center}
\caption{
For the model graph $\mg{6}$, we have that
$G_{1}$ is a yes-instance, $G_{2}$ is a no-instance, and $G_{3}$ is a yes-instance.}\label{fig:ex_no_instance}
\end{figure*}

Given an integer $n\geqslant 4$ and a model graph $H$, we want to build a graph $G$ with $n$ vertices that is a no-instance for the problem $\hph{H}$.
The following proposition tells us when this is possible.

\begin{prop}\label{prop:existence_no_inst}
Let $H$ be a model graph and $n$ an integer such that $n\geqslant 4$.
There is an undirected  graph~$G$ with $n$ vertices that is a no-instance for $\hph{H}$ if and only if $H$ is not the model $\mg{1}$ in Figure~\ref{fig:list_models}.
\end{prop}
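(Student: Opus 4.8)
The plan is to prove the biconditional in Proposition~\ref{prop:existence_no_inst} by establishing each direction separately, and the work splits naturally according to whether $H$ is the model graph~$\mg{1}$.

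\textbf{The easy direction.} First I would show that if $H = \mg{1}$, then \emph{every} undirected graph~$G$ with $n \geqslant 4$ vertices is a yes-instance, so no no-instance exists. Recall that $\mg{1}$ is the model graph with no edges at all (four isolated vertices). By Lemma~\ref{lem:isolated_vertex}, for a model graph with an isolated vertex it suffices to exhibit a single $H$-isomorphic quadruplet of distinct vertices. But when $H$ has no full or dotted edges whatsoever, \emph{any} four distinct vertices of~$G$ trivially form an $H$-isomorphic quadruplet, since there are no constraints to violate. As $n \geqslant 4$, such a quadruplet always exists, so $G$ is a yes-instance by Lemma~\ref{lem:isolated_vertex}. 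This settles the ``only if'' direction by contraposition: if a no-instance with $n$~vertices exists, then $H \neq \mg{1}$.

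\textbf{The main direction.} The substance is the converse: for every model graph $H \neq \mg{1}$ and every $n \geqslant 4$, I must construct an $n$-vertex no-instance. Since $H \neq \mg{1}$, the model graph contains at least one edge $(p,q)$, which is either full or dotted. The idea is to use the extremal graphs of Lemma~\ref{lem:min_max} to produce a graph whose edge count is \emph{outside} the feasible interval for every admissible labeling, thereby forcing a no-instance. Concretely, I would argue that if $(p,q)$ is a full edge, no graph with strictly fewer than one edge --- that is, the empty graph on $n$ vertices --- can be a yes-instance, because any $H$-partition must place a vertex under each label (the classes are non-empty), and the full constraint between the classes of~$p$ and~$q$ forces at least $\#p \cdot \#q \geqslant 1$ edges; the empty graph has none. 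Symmetrically, if $(p,q)$ is a dotted edge, the complete graph $K_n$ on $n$ vertices cannot be a yes-instance, since the dotted constraint forbids at least $\#p \cdot \#q \geqslant 1$ edges, whereas $K_n$ realizes all $\frac{n(n-1)}{2}$ possible edges. In both cases the witnessed extremal graph has exactly $n \geqslant 4$ vertices, so it is a valid no-instance.

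\textbf{Where the care is needed.} The delicate point is the non-emptiness of all four partition classes: the argument above relies on the fact that in any solution each of $V_A, V_B, V_C, V_D$ is non-empty (guaranteed by the definition of $\hp$), so that $\#p \geqslant 1$ and $\#q \geqslant 1$ for the endpoints of the chosen edge, and hence $\#p \cdot \#q \geqslant 1$. This is what makes the empty graph (respectively $K_n$) genuinely infeasible rather than vacuously feasible. One should double-check that the endpoints $p$ and $q$ of the edge are the labels forced to be adjacent (full) or nonadjacent (dotted), and that the corresponding lower bound on $\#p\cdot\#q$ cannot be evaded by any choice of class sizes summing to~$n$; since every class is non-empty, each factor is at least~$1$ regardless of how the remaining $n - 2 \geqslant 2$ vertices are distributed, so the bound is robust. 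I expect this non-emptiness bookkeeping --- rather than any computation --- to be the only real obstacle, and it is resolved directly from the problem definition.
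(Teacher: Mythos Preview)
Your proposal is correct and follows essentially the same approach as the paper: both argue that $\mg{1}$ imposes no constraints (so every graph is a yes-instance), and for $H\neq\mg{1}$ both exhibit the empty graph when $H$ has a full edge and the complete graph $K_n$ when $H$ has a dotted edge. Your invocation of Lemma~\ref{lem:isolated_vertex} and Lemma~\ref{lem:min_max} is a bit more machinery than the paper uses---it argues directly that the relevant constraint can never be satisfied---but the construction and logic are identical.
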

\begin{proof}
Assume that $H$ is the model graph $\mg{1}$. Then, any graph $G$ with $n$ vertices is a yes-instance because there is no constraint to satisfy, due to the absence of full and dotted edges.

For the opposite direction, assume that $H$ is not the model graph $\mg{1}$.
Then $H$ contains a dotted edge or a full edge.
\begin{itemize}
\item If $H$ contains at least one full edge, the empty graph with $n$ vertices and no edge is a no-instance because the full constraint will never be satisfied for any labeling of the vertices. 
\item 
If $H$ contains at least one dotted edge, the complete graph with $n$ vertices and $\frac{n(n-1)}{2}$ edges is a no-instance because the dotted constraint will never be satisfied for any labeling of the vertices.
\end{itemize}
This concludes the proof.
\end{proof}

The undirected graphs in the proof of Proposition~\ref{prop:existence_no_inst} are of little interest in an experimental validation, because they do not admit a $H$-isomorphic quadruplet of vertices, and therefore the Datalog program would not even enter its recursive part.
%
%
Instead, for a model graph $H$, we would like to generate no-instances that contain a $H$-isomorphic quadruplet. Note here that, by Lemma \ref{lem:isolated_vertex}, such no-instances do not exist if $H$ has an isolated vertex.

It would be interesting to develop an approach similar to that for yes-instances:
start with a small no-instance and extend it until some desired number of vertices  is reached. Nevertheless, this seems a difficult task, because the addition of edges for new vertices can turn no-instances into yes-instances, and vice versa.
For example, in Figure~\ref{fig:ex_no_instance}, $G_{1}$ is extended by $G_{2}$, and $G_{2}$ is extended by $G_{3}$. 
For the model graph $\mg{6}$, we have that $G_{2}$ is a no-instance, while $G_{1}$ and $G_{3}$ are yes-instances.

In view of the aforementioned difficulties, we have decided to generate no-instances in a randomized fashion.
Given $n \geqslant 4$ and model graph $H$ that is not the model $\mg{1}$ of Figure~\ref{fig:list_models}, the idea is to generate random graphs with $n$ vertices until a no-instance for $\hph{H}$ is found.
To see whether this approach is viable, we have generated random graphs with $n=100$ by adding an edge to a graph with a given probability, its expected density. The expected density is considered as a random variable whose probability distribution is a Gaussian distribution, $\mathcal{N}(0.5, 0.25)$. 
This distribution seems better than a ``classical'' uniform distribution on the expected density because there are more possible graphs with a density of $0.5$ than $0.01$ or $0.99$.
Table \ref{tab:proportion_yes_no} shows, for each model graph in Figure~\ref{fig:list_models}, the number of yes-instances and no-instances over a sample of $1000$ generated graphs. 

For the model graphs $\mg{1}$, $\mg{2}$, $\mg{3}$, $\mg{4}$, $\mg{18}$, $\mg{22}$, $\mg{29}$ of Figure~\ref{fig:list_models}, there are more yes-instances than no-instances. Except for $\mg{18}$, these model graphs have an isolated vertex and, by Lemma~\ref{lem:isolated_vertex}, an input graph is a yes-instance as soon as it contains a $H$-isomorphic quadruplet. For the model graph $\mg{18}$, a graph is easily a yes-instance because it suffices to have an edge $(u,v)$ such that $u$ is nonadjacent to at least two other vertices: the vertex $u$ will be labeled by $C$, all vertices adajacent to~$u$ will have the label $A$, and all nonadjacent vertices will have label $B$ or $D$.
For model graphs that are distinct from the above seven model graphs, there are considerably more no-instances than yes-instances.
In conclusion, our approach should find a no-instance in a few tries, except for the seven models that have been discussed before, for which we will use the no-instances of the proof of Proposition~\ref{prop:existence_no_inst}.
 
\begin{table}\centering
\begin{tabular}{|c|r|r||c|r|r|}
\hline
model graph & \multicolumn{1}{c|}{\# yes} & \multicolumn{1}{c||}{\# no} & model graph & \multicolumn{1}{c|}{\# yes} & \multicolumn{1}{c|}{\# no}\\
\hline\hline
$\mg{1}$ & $1000$ & $0$ & $\mg{18}$ & $999$ & $1$ \\
\hline
$\mg{2}$ & $1000$ & $0$ & $\mg{19}$ & $185$ & $815$ \\
\hline
$\mg{3}$ & $1000$ & $0$ & $\mg{20}$ & $15$ & $985$ \\
\hline
$\mg{4}$ & $998$ & $2$ & $\mg{21}$ & $35$ & $965$ \\
\hline
$\mg{5}$ & $15$ & $985$ & $\mg{22}$ & $1000$ & $0$ \\
\hline
$\mg{6}$ & $175$ & $825$ & $\mg{23}$ & $15$ & $985$ \\
\hline
$\mg{7}$ & $13$ & $987$ & $\mg{24}$ & $15$ & $985$ \\
\hline
$\mg{8}$ & $15$ & $985$ & $\mg{25}$ & $14$ & $986$ \\
\hline
$\mg{9}$ & $13$ & $987$ & $\mg{26}$ & $56$ & $944$ \\
\hline
$\mg{10}$ & $9$ & $991$ & $\mg{27}$ & $15$ & $985$ \\
\hline
$\mg{11}$ & $0$ & $1000$ & $\mg{28}$ & $5$ & $995$ \\
\hline
$\mg{12}$ & $0$ & $1000$ & $\mg{29}$ & $1000$ & $0$ \\
\hline
$\mg{13}$ & $0$ & $1000$ & $\mg{30}$ & $0$ & $1000$ \\
\hline
$\mg{14}$ & $2$ & $998$ & $\mg{31}$ & $0$ & $1000$ \\
\hline
$\mg{15}$ & $4$ & $996$ & $\mg{32}$ & $6$ & $994$ \\
\hline
$\mg{16}$ & $4$ & $996$ & $\mg{33}$ & $15$ & $985$ \\
\hline
$\mg{17}$ & $183$ & $817$ & $\mg{34}$ & $12$ & $988$ \\
\hline
\end{tabular}
\caption{Proportion of yes- and no-instances.}\label{tab:proportion_yes_no}
\end{table}

\begin{figure}
\begin{center}
\includegraphics[scale=1.0]{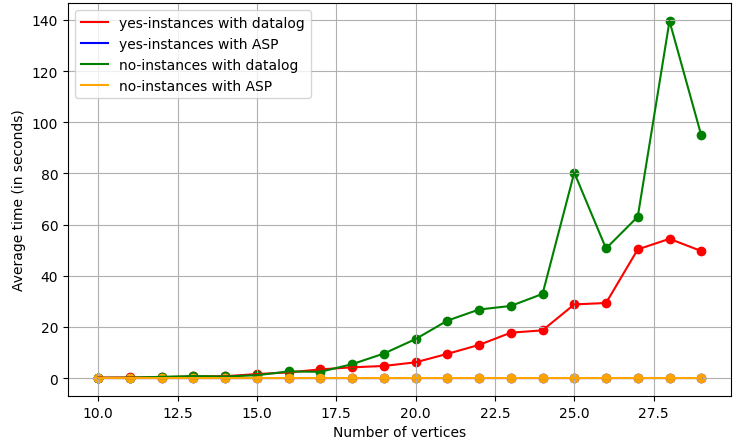}
\end{center}
\caption{Experimental results in Clingo (the blue line is hidden behind the yellow line).}\label{fig:graphic_time}
\end{figure}

\subsection{Execution Times}\label{sec:comparing}

We have used the answer set solver Clingo~\cite{DBLP:conf/lpnmr/GebserKKS11, DBLP:journals/corr/GebserKKS14} to compare the execution times of ASP guess-and-check programs with   programs in Datalog with stratified negation.
Figure~\ref{fig:graphic_time} shows execution times on input graphs of different small sizes that were generated like explained in Section~\ref{sec:gen_instances}.
The execution times reported in Figure~\ref{fig:graphic_time} are average execution times over the $34$~model graphs of Figure~\ref{fig:list_models}.
It is immediately clear from this figure that in Clingo, the guess-and-check program is much more efficient than our Datalog programs with stratified negation.
The difference in efficiency was even more striking for larger graphs.
In fact, on graphs with more than $50$ vertices,  we encountered memory problems when executing our Datalog programs in Clingo, which may be due to the size of the grounded program.

\section{Conclusion} \label{sec:conclusion}

We presented a logical programming approach to the problem of finding $H$-partitions in undirected graphs.
In particular, we showed how the polynomial-time solutions given in~\cite{DBLP:journals/ita/DantasFGK05} can be encoded in Datalog with stratified negation.
In an experimental validation, we found that in the answer set solver Clingo,
these Datalog programs execute much slower than a simple guess-and-check program.
We also studied the problem of generating yes- and no-instances for $\hp$ with a fixed number of vertices or edges. 

Our experiments indicate that when using an answer set solver, there may be no benefit in writing programs that are theoretically more efficient than guess-and-check programs.
It would be interesting to investigate execution times of our Datalog programs on a dedicated Datalog engine.
It would also be interesting to find systematic methods for generating no-instances with a fixed number of vertices or edges.

\bibliographystyle{alpha}
\bibliography{biblio.bib}

\end{document}